\documentclass[letterpaper,10pt,conference]{ieeeconf}  
\IEEEoverridecommandlockouts                              
\usepackage{cite}
\usepackage{amsmath,amssymb,amsfonts}
\usepackage{mathrsfs}
\usepackage{graphicx}
\usepackage{xcolor}
\let\labelindent\relax
\usepackage{enumitem}
\usepackage[normalem]{ulem}
\makeatletter
\let\NAT@parse\undefined
\makeatother
\usepackage[
    colorlinks=true,
    linkcolor=blue,
    citecolor=blue,
    urlcolor=blue
]{hyperref}

\usepackage{flushend}

\graphicspath{{figures/}}

\definecolor{myred}{rgb}{0.7,0.1,0.16}

\newtheorem{theorem}{Theorem}
\newtheorem{proposition}{Proposition}

\newtheorem{lemma}{Lemma}

\newtheorem{assumption}{Assumption}

\newcommand{\R}{\mathbb{R}}

\catcode`\@=11
\def\downparenfill{$\m@th\braceld\leaders\vrule\hfill\bracerd$}
\def\overparen#1{\mathop{\vbox{\ialign{##\crcr\crcr \noalign{\kern0.4ex}
\downparenfill\crcr\noalign{\kern0.4ex\nointerlineskip}
$\hfil\displaystyle{#1}\hfil$\crcr}}}\limits}
\catcode`\@=12

\title{\LARGE \bf
Partial-Scan-and-Move Source Seeking for Mobile Robots
}

\author{Bo Wang and Ishvar Sitaldin 
\thanks{Bo Wang and Ishvar Sitaldin are with the Department of Mechanical Engineering, The City College of New York, The City University of New York, New York, NY 10031, USA (e-mail: bwang1@ccny.cuny.edu). }
}

\begin{document}

\maketitle
\thispagestyle{empty}
\pagestyle{empty}

\begin{abstract}
This paper presents a partial-scan-and-move strategy for source seeking with a mobile robot equipped with an offset scalar sensor. At each robot position, the sensor collects source field measurements while the robot rotates. Instead of requiring a complete circular scan before every move, we ask when the measurements collected over only part of the circle are already sufficient to determine the next action. We develop a gradient estimation method for partial scans together with a confidence set that accounts for measurement noise and local field variation. The robot uses this confidence set to decide whether it is close enough to the source or has enough information to move in a descent direction. We show that, under suitable conditions, each decision can be made within a prescribed partial scan and that the robot reaches a desired neighborhood of the source in finitely many moves with high probability. 
\end{abstract}

\section{Introduction}\label{sec:introduction}

Mobile robot source seeking concerns the problem of locating an unknown source using only local measurements of a spatially varying signal field \cite{zhang2007source,cochran2009nonholonomic}. This problem arises in a wide range of applications, including environmental monitoring, hazardous-material detection, search and rescue, and autonomous exploration. Depending on the application, the measured field may represent electromagnetic intensity, chemical concentration, light intensity, acoustic energy, temperature, or other physical quantities. In many such tasks, the robot does not have prior knowledge of the source location or a complete map of the field. It must therefore infer how to move toward the source from measurements collected along its trajectory. This makes source seeking a fundamental problem at the intersection of sensing, estimation, and motion control for autonomous robots.

A large body of work on mobile robot source seeking is based on extremum-seeking control \cite{krstic2000stability,DurrStankovicEbenbauerJohansson2013,Suttner2023TAC}. In these methods, the robot deliberately excites its motion and uses the resulting variation in the measured field value to infer a direction toward the source, without requiring a model of the field or knowledge of the source location. Both deterministic schemes based on periodic probing signals \cite{wang2023underactuated,wang2026nonholonomicsourceseekingtorque,wang2025extremum} and stochastic variants \cite{liu2010stochastic,lin2017stochastic} have been developed for nonholonomic mobile robots. Other source-seeking approaches estimate the spatial gradient from measurements collected at different locations, either by a single moving robot or by multiple cooperating agents~\cite{Burian1996,Ogren2004}. There are also navigation methods that steer the robot toward an extremum without explicitly reconstructing the field gradient, for example through suitably designed steering or sliding-mode laws \cite{matveev2011navigation}. Despite their different implementations, many of these approaches share a common idea: spatial information about an unknown field must be generated from measurements taken at different sensor locations and then converted into a motion direction.

The present work is closely related to source-seeking methods that estimate spatial gradient information directly from scalar measurements collected at different locations. An early example is the AUV gradient-search method in \cite{Burian1996}, where vehicle maneuvers, including circular motions, are used to obtain measurements to estimate the local gradient. Least-squares gradient estimation has also been used in cooperative source seeking, where measurements from multiple robots at different positions are combined to estimate the gradient at the formation center \cite{LiGuo2012}, and in extremum-seeking schemes based on first-order or recursive least-squares estimation \cite{Hunnekens2014,Zengin2020}. More recently, \cite{JamesKavanaugh2026} used a servo-actuated sensor to explore the local field around a nonholonomic vehicle and extract derivative information for source seeking. These works show that spatially distributed scalar measurements can provide useful gradient information. In contrast, this paper asks a different question: whether a circular scan can be stopped early once sufficient information is available to move toward the source.

In this paper, we develop a \textit{scan-and-move strategy} for source seeking with an offset scalar sensor. At each robot position, the robot rotates in place and collects measurements along a circular arc before moving toward the source. A complete circular scan can be inefficient because sufficient directional information may be available well before a full rotation is completed. We therefore allow the robot to stop scanning as soon as the collected measurements provide enough information for reliable motion. To the best of our knowledge, partial circular scans and their stopping conditions have not been systematically studied for source seeking. The main contributions are threefold: (i) we characterize the limitations of gradient estimation from fixed-radius measurements and show the additional error introduced by first-harmonic estimation under partial scans; (ii) we develop a second-order harmonic least-squares estimator and an anytime-valid confidence set for the local gradient, which provides a stopping rule for the scan; and (iii) we give conditions guaranteeing that each scan terminates before a full rotation and prove finite-move convergence to a prescribed gradient tolerance with high probability.


\section{Partial-Scan Source Seeking}\label{sec:problem}

\subsection{Scan-and-Move Source-Seeking Setup}

Consider a planar mobile robot with unicycle kinematics. Let $p=[x~y]^\top\in\R^2$ and $\theta\in\mathbb{R}$ denote the robot position and heading. We assume that a scalar sensor is mounted at a fixed distance $\rho>0$ from the robot center along its heading direction. With $\varphi(\theta)=[\cos\theta~\sin\theta]^\top$, the unicycle kinematics and the position of the scalar sensor are
\begin{equation}
\dot p=v\varphi(\theta),\quad
\dot\theta=\omega,\quad
p^{\rm s}=p+\rho\varphi(\theta),
\label{eq:model}
\end{equation}
where $v$ and $\omega$ are the translational and angular velocities, respectively, and $p^{\rm s}\in\R^2$ denotes the sensor position.

We assume that the robot operates in an unknown static scalar signal field $F:\R^2\to\R$, which can be measured only through the onboard sensor. The motion is organized into \textit{scan-and-move episodes} indexed by $k\in\mathbb{Z}_{\geq0}$. During the scan phase of episode $k$, the robot center is held fixed at $p_k$. Let $\theta_k$ denote the robot heading at the beginning of the episode, and let $\alpha_{k,i}$ denote the relative scan bearing of the $i$th measurement with respect to this initial heading. Thus, the sensor heading at the $i$th measurement is $\theta_k+\alpha_{k,i}$, and the measurement is
\begin{equation}
\mathsf{y}_{k,i}=F\!\left(p_k+\rho\varphi(\theta_k+\alpha_{k,i})\right)+w_{k,i},
\label{eq:measurement}
\end{equation}
where $w_{k,i}$ denotes measurement noise. The relative scan bearing $\alpha_{k,i}$ is known to the robot, whereas the absolute heading $\theta_k$ need not be known.

We make the following assumption on the field $F$.
\begin{assumption}\label{ass:main}
The scalar field $F$ has a unique global minimizer $p^\star\in\R^2$, referred to as the source, and $p^\star$ is the only stationary point of $F$. The initial sublevel set $\{p\in\R^2:F(p)\leq F(p_0)\}$ is compact. The field satisfies $F\in C^3$. Its gradient is uniformly bounded and Lipschitz continuous, and its third-order derivatives are uniformly bounded. Specifically, there exist known constants $G,L,M_3>0$ such that, for all $p,\bar p\in\R^2$, $\|\nabla F(p)\|\leq G$, $\|\nabla F(p)-\nabla F(\bar p)\|\leq L\|p-\bar p\|$, and $\|D^3F(p)\|\leq M_3$. Finally, the initial field-value gap satisfies $F(p_0)-F(p^\star)\leq\Delta_0$ for some known constant $\Delta_0>0$.
\end{assumption}

Let $\mathcal F_{k,i}$ denote the information generated by all relative scan bearings and measurements up to sample $i$ of episode $k$, including previous episodes. We make the following assumption on the measurement noise and the execution of the scan-and-move procedure.
\begin{assumption}\label{ass:noise-motion}
For each episode $k$ and sample $i$, the measurement noise $w_{k,i}$ is conditionally zero-mean given the information available before the $i$th measurement, i.e., $\mathbb{E}[w_{k,i}\;|\;\mathcal F_{k,i-1}]=0$. It is also conditionally sub-Gaussian with known variance proxy $\sigma^2$, i.e., $\mathbb{E}[e^{\lambda w_{k,i}}\;|\;\mathcal F_{k,i-1}]\leq\exp(\lambda^2\sigma^2/2)$ for all $\lambda\in\mathbb{R}$. The relative scan bearing $\alpha_{k,i}$ is $\mathcal F_{k,i-1}$-measurable; thus, it may depend on previous measurements but is chosen before the $i$th measurement is taken. During each scan, the robot holds its center fixed and, after the scan, executes the commanded rotation and translation.
\end{assumption}

\textit{Problem Statement:} Given prescribed $\varepsilon>0$ and $\delta\in(0,1)$, design a scan-and-move strategy using only scalar field measurements to approach the source $p^\star$ and, with probability at least $1-\delta$, terminate after finitely many episodes at a point $p_K$ satisfying $\|\nabla F(p_K)\|\leq\varepsilon$. 

Here, $\varepsilon$ specifies the desired terminal accuracy, measured by the gradient norm, while $\delta$ bounds the probability that the guarantee fails.

\subsection{Limits of Fixed-Radius Gradient Estimation}

Before considering partial scans, we first ask how accurately the gradient can be recovered even from a complete noiseless scan. To isolate the information limitation of fixed-radius sensing, we consider the broader class $\mathscr{F}_3(M_3)$ of fields that are $C^3$ on a neighborhood of the closed sensing disk and satisfy $\|D^3F\|\leq M_3$ there.

For a fixed scan center $p$ and sensing radius $\rho$, define the complete noiseless circular observation
\begin{equation}
\mathcal{Y}_{\rho,p}(F):=\left\{\left(\alpha,F\!\left(p+\rho\varphi(\alpha)\right)\right):
\alpha\in[0,2\pi)\right\}.
\label{eq:circular-observation}
\end{equation}

\begin{lemma}[Fixed-radius gradient estimation]\label{lem:ambiguity}
For every estimator $\mathcal{E}$ that maps $\mathcal{Y}_{\rho,p}(F)$ to a vector in $\R^2$,
\begin{equation}
\sup_{F\in\mathscr{F}_3(M_3)}\left\|\mathcal{E}\!\left(\mathcal{Y}_{\rho,p}(F)\right)-\nabla F(p)\right\|\geq\frac{M_3\rho^2}{6}.
\label{eq:ambiguity}
\end{equation}
\end{lemma}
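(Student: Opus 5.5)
The plan is a standard two-point (Le Cam-type) indistinguishability argument. I will construct two fields $F_+,F_-\in\mathscr{F}_3(M_3)$ with identical complete circular observations, $\mathcal{Y}_{\rho,p}(F_+)=\mathcal{Y}_{\rho,p}(F_-)$, but with gradients at $p$ differing by $M_3\rho^2/3$. Since $\mathcal{E}$ receives the same input in both cases, it outputs the same vector $\hat g$. The triangle inequality then gives
\[
\max_{\pm}\|\hat g-\nabla F_\pm(p)\|\geq \tfrac12\|\nabla F_+(p)-\nabla F_-(p)\|=\tfrac{M_3\rho^2}{6},
\]
which yields \eqref{eq:ambiguity}.

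\textbf{Construction.} Write $u=q-p=[u_1~u_2]^\top$ and set
\[
h(q)=c\,u_1\bigl(u_1^2+u_2^2-\rho^2\bigr),\qquad F_\pm=\pm h,\qquad c=M_3/6.
\]
This choice has three properties.
\begin{itemize}
\item $h$ is a cubic polynomial, hence $C^\infty$ on all of $\R^2$.
\item $h$ vanishes identically on the circle $\|u\|=\rho$. Thus $F_+(p+\rho\varphi(\alpha))=F_-(p+\rho\varphi(\alpha))=0$ for every $\alpha$, so the two observations coincide.
\item $\nabla h(p)=[-c\rho^2~~0]^\top$, so $\|\nabla F_+(p)-\nabla F_-(p)\|=2c\rho^2=M_3\rho^2/3$.
\end{itemize}

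\textbf{Third-derivative bound.} It remains to verify $\|D^3F_\pm\|\leq M_3$. The third derivative of $h$ is a constant symmetric trilinear form. Its nonzero entries are $\partial_{111}h=6c$ and $\partial_{122}h=2c$ together with the permutations of the latter. For a unit vector $v$ this gives
\[
D^3h[v,v,v]=6c\,v_1^3+6c\,v_1v_2^2=6c\,v_1.
\]
Hence $\sup_{\|v\|=1}|D^3h[v,v,v]|=6c=M_3$.

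\textbf{Main obstacle: the norm convention.} This is the only delicate point. I would interpret $\|D^3F\|$ as the operator norm of the trilinear form, $\sup_{\|v_i\|=1}|D^3F[v_1,v_2,v_3]|$. For symmetric forms on a Hilbert space this equals the diagonal supremum $\sup_{\|v\|=1}|D^3F[v,v,v]|$ (Banach's polarization theorem), so the constant $c=M_3/6$ is exactly admissible.

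If instead a Frobenius-type norm were intended, the same construction still works after rescaling $c$, but the constant in \eqref{eq:ambiguity} would change. I would therefore state the operator-norm convention explicitly, consistent with its use in the Taylor remainder bounds later in the paper. With that settled, the class membership, the equality of observations and the gradient separation together complete the proof.
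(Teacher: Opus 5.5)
Your proof is correct and is essentially the paper's argument: your cubic $c\,u_1(\|u\|^2-\rho^2)$ with $c=M_3/6$ is, up to sign, the paper's $h(z)=a^\top z(1-\|z\|^2/\rho^2)$ with $a_0=e_1$, and the paper likewise uses identical circular observations and gradients $\pm a$ to conclude via $\max\{\|\hat g-a\|,\|\hat g+a\|\}\geq\|a\|$. Your computation $\sup_{\|v\|=1}|D^3h[v,v,v]|=6c=M_3$, under the operator-norm convention, simply spells out the paper's ``direct differentiation'' step.
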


\begin{proof}
Fix a unit vector $a_0\in\R^2$ and let $a=(M_3\rho^2/6)a_0$. For $z\in\R^2$, define $h(z)=a^\top z(1-\|z\|^2/\rho^2)$ and consider the two fields $F_+(p+z)=h(z)$ and $F_-(p+z)=-h(z)$. If $\|z\|=\rho$, then $h(z)=0$, and hence $\mathcal{Y}_{\rho,p}(F_+)=\mathcal{Y}_{\rho,p}(F_-)$. However, $\nabla F_+(p)=a$ and $\nabla F_-(p)=-a$.

Direct differentiation gives $\|D^3h\|\leq6\|a\|/\rho^2=M_3$, so both $F_+$ and $F_-$ belong to $\mathscr{F}_3(M_3)$. Since they produce identical observations, $\mathcal E$ returns the same estimate, denoted by $\hat g$, for both fields. Therefore,
\begin{equation}
\max\{\|\hat g-a\|,\|\hat g+a\|\}\geq\|a\|=\frac{M_3\rho^2}{6},
\end{equation}
which proves \eqref{eq:ambiguity}.
\end{proof}

Lemma~\ref{lem:ambiguity} shows that fixed-radius sensing over a general $C^3$ field class has an unavoidable worst-case gradient error of at least order $\rho^2$, even when the \textit{complete} noiseless circular observation is available. This limitation is due to the sensing geometry itself: different fields can produce exactly the same measurements on the sensing circle while having different gradients at its center. Improving uniformly beyond this scale therefore requires information beyond a single sensing circle, such as additional radial measurements.

\subsection{First-Harmonic Regression under Partial Scans}

Lemma~\ref{lem:ambiguity} shows that fixed-radius sensing has an unavoidable worst-case gradient-error lower bound of order $\rho^2$. We now ask whether a simple gradient estimate based on a \textit{partial scan} introduces a larger error.

For the analysis below, we use the episode-local frame, so $\alpha$ directly denotes the scan bearing. For a scan centered at $p$, the second-order Taylor expansion gives
\begin{equation}
\begin{aligned}
F\!\left(p+\rho\varphi(\alpha)\right)=&F(p)+\rho\nabla F(p)^\top\varphi(\alpha)\\
&+\frac{\rho^2}{2}\varphi(\alpha)^\top\nabla^2F(p)\varphi(\alpha)+r_p(\alpha),
\end{aligned}
\label{eq:taylor-expansion}
\end{equation}
where $r_p(\alpha)$ is the third-order Taylor remainder satisfying $|r_p(\alpha)|\leq M_3\rho^3/6$. Using $\nabla F(p)^\top\varphi(\alpha)=F_x(p)\cos\alpha+F_y(p)\sin\alpha$ and expanding the quadratic term yields
\begin{equation}
\begin{aligned}
F\!\left(p+\rho\varphi(\alpha)\right)&=c_0(p)+\rho F_x(p)\cos\alpha+\rho F_y(p)\sin\alpha\\
&\;+c_{\rm c}(p)\cos2\alpha+c_{\rm s}(p)\sin2\alpha+r_p(\alpha),
\end{aligned}
\label{eq:harmonic}
\end{equation}
where $c_0(p)=F(p)+\rho^2(F_{xx}(p)+F_{yy}(p))/4$, $c_{\rm c}(p)=\rho^2(F_{xx}(p)-F_{yy}(p))/4$, and $c_{\rm s}(p)=\rho^2F_{xy}(p)/2$.

Equation~\eqref{eq:harmonic} separates the local field variation into angular harmonics. The gradient components appear in the coefficients of $\cos\alpha$ and $\sin\alpha$, while curvature contributes to the constant and second-harmonic terms. Define
\begin{equation}
\zeta^1(p)=
\begin{bmatrix}
c_0(p)\\
\rho F_x(p)\\
\rho F_y(p)
\end{bmatrix},
\quad
\zeta^2(p)=
\begin{bmatrix}
c_{\rm c}(p)\\
c_{\rm s}(p)
\end{bmatrix}.
\label{eq:harmonic-parameters}
\end{equation}
The gradient can be recovered from $\zeta^1(p)$ as
\begin{equation}
\nabla F(p)=\frac{1}{\rho}E\zeta^1(p),\quad 
E=\begin{bmatrix}
0&1&0\\
0&0&1
\end{bmatrix}.
\label{eq:gradient-from-zeta1}
\end{equation}

Consider episode $k$ after $n$ noiseless measurements at relative scan bearings $\alpha_{k,1},\ldots,\alpha_{k,n}$. Define
\begin{equation}
\Phi^1_{k,n}=
\begin{bmatrix}
1&\cos\alpha_{k,1}&\sin\alpha_{k,1}\\
\vdots&\vdots&\vdots\\
1&\cos\alpha_{k,n}&\sin\alpha_{k,n}
\end{bmatrix}
\in\R^{n\times3},
\label{eq:Phi1}
\end{equation}
and
\begin{equation}
\Phi^2_{k,n}=
\begin{bmatrix}
\cos2\alpha_{k,1}&\sin2\alpha_{k,1}\\
\vdots&\vdots\\
\cos2\alpha_{k,n}&\sin2\alpha_{k,n}
\end{bmatrix}
\in\R^{n\times2}.
\label{eq:Phi2}
\end{equation}
Let
\begin{equation}
\mathsf{Y}_{k,n}=
\begin{bmatrix}
\mathsf{y}_{k,1}\\
\vdots\\
\mathsf{y}_{k,n}
\end{bmatrix},
\quad
r_{k,n}=
\begin{bmatrix}
r_{p_k}(\alpha_{k,1})\\
\vdots\\
r_{p_k}(\alpha_{k,n})
\end{bmatrix}.
\label{eq:stacked-data}
\end{equation}
Then \eqref{eq:harmonic} gives
\begin{equation}
\mathsf{Y}_{k,n}=\Phi^1_{k,n}\zeta^1(p_k)+\Phi^2_{k,n}\zeta^2(p_k)+r_{k,n}.
\label{eq:stacked-harmonic-model}
\end{equation}

A natural method is to ignore the second-harmonic and Taylor-remainder terms and fit the constant and first harmonics by ordinary least squares. If $\Phi^1_{k,n}$ has full column rank, the ordinary least-squares estimate is
\begin{align}
\hat\zeta^{\,1}_{k,n}&=\arg\min_{\zeta\in\R^3}\|\mathsf{Y}_{k,n}-\Phi^1_{k,n}\zeta\|^2 \notag\\
&=\big((\Phi^1_{k,n})^\top\Phi^1_{k,n}\big)^{-1}(\Phi^1_{k,n})^\top\mathsf{Y}_{k,n}.
\label{eq:first-harmonic-OLS}
\end{align}
The corresponding gradient estimate is
\begin{equation}
\hat g^{\,1}_{k,n}=\frac{1}{\rho}E\hat\zeta^{\,1}_{k,n}.
\label{eq:first-harmonic-gradient}
\end{equation}

For a \textit{partial scan}, the following result characterizes the error of this first-harmonic gradient estimate.

\begin{proposition}[Curvature leakage]\label{prop:leakage}
Suppose that the measurements are noiseless and $\Phi^1_{k,n}$ has full column rank. Then
\begin{equation}
\hat g^{\,1}_{k,n}-\nabla F(p_k)=\rho B_{k,n}
\begin{bmatrix}
(F_{xx}(p_k)-F_{yy}(p_k))/4\\
F_{xy}(p_k)/2
\end{bmatrix}
+\varepsilon_{k,n},
\label{eq:leakage}
\end{equation}
where
\begin{equation}
B_{k,n}=E\big((\Phi^1_{k,n})^\top\Phi^1_{k,n}\big)^{-1}(\Phi^1_{k,n})^\top\Phi^2_{k,n},
\label{eq:Bkn}
\end{equation}
and
\begin{equation}
\varepsilon_{k,n}=\frac{1}{\rho}E\big((\Phi^1_{k,n})^\top\Phi^1_{k,n}\big)^{-1}(\Phi^1_{k,n})^\top r_{k,n}.
\label{eq:first-harmonic-remainder}
\end{equation}
Moreover,
\begin{equation}
\|\varepsilon_{k,n}\|\leq\frac{M_3\rho^2\sqrt n}{6}\left\|E\big((\Phi^1_{k,n})^\top\Phi^1_{k,n}\big)^{-1}(\Phi^1_{k,n})^\top\right\|.
\label{eq:first-harmonic-remainder-bound}
\end{equation}
\end{proposition}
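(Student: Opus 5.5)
The plan is to substitute the noiseless stacked model \eqref{eq:stacked-harmonic-model} directly into the least-squares formula \eqref{eq:first-harmonic-OLS} and read off each term. Write $P_{k,n}:=\big((\Phi^1_{k,n})^\top\Phi^1_{k,n}\big)^{-1}(\Phi^1_{k,n})^\top$. This matrix is well defined because $\Phi^1_{k,n}$ has full column rank, and it is a left inverse of $\Phi^1_{k,n}$, i.e., $P_{k,n}\Phi^1_{k,n}=I_3$. With $w_{k,i}=0$, $\mathsf Y_{k,n}$ equals the right-hand side of \eqref{eq:stacked-harmonic-model}, and therefore
\begin{equation*}
\hat\zeta^{\,1}_{k,n}=\zeta^1(p_k)+P_{k,n}\Phi^2_{k,n}\zeta^2(p_k)+P_{k,n}r_{k,n}.
\end{equation*}

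Next, multiply this identity by $\frac1\rho E$ and use \eqref{eq:first-harmonic-gradient} together with the recovery identity \eqref{eq:gradient-from-zeta1}, $\frac1\rho E\zeta^1(p_k)=\nabla F(p_k)$. This gives
\begin{equation*}
\hat g^{\,1}_{k,n}-\nabla F(p_k)=\tfrac1\rho B_{k,n}\zeta^2(p_k)+\varepsilon_{k,n},
\end{equation*}
with $B_{k,n}$ and $\varepsilon_{k,n}$ exactly as in \eqref{eq:Bkn} and \eqref{eq:first-harmonic-remainder}. From the definitions after \eqref{eq:harmonic}, $\zeta^2(p_k)=\rho^2\big[(F_{xx}-F_{yy})/4,\;F_{xy}/2\big]^\top$ evaluated at $p_k$. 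Hence $\frac1\rho\zeta^2(p_k)$ equals $\rho$ times the vector in \eqref{eq:leakage}, which yields the leakage identity.

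For the bound \eqref{eq:first-harmonic-remainder-bound}, apply the operator-norm inequality to get $\|\varepsilon_{k,n}\|\le\frac1\rho\|EP_{k,n}\|\,\|r_{k,n}\|$. Each entry of $r_{k,n}$ satisfies $|r_{p_k}(\alpha_{k,i})|\le M_3\rho^3/6$ by the Taylor remainder bound stated after \eqref{eq:taylor-expansion}. Consequently $\|r_{k,n}\|\le\sqrt n\,M_3\rho^3/6$, and the factor $1/\rho$ reduces $\rho^3$ to $\rho^2$.

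The argument is essentially linear-algebraic bookkeeping. The only point needing care is the remainder estimate $|r_p(\alpha)|\le M_3\rho^3/6$, which the paper takes as given from \eqref{eq:taylor-expansion}. I will simply invoke it: it is the Lagrange form of the third-order remainder, and it holds with the operator-norm convention for $\|D^3F\|$ because $\varphi(\alpha)$ is a unit vector. No other obstacle arises, since full column rank is assumed and nothing about the scan geometry beyond invertibility of the Gram matrix is used.
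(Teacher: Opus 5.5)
Your proposal is correct and follows the paper's proof essentially step for step: substitute the stacked model into the OLS formula, premultiply by $E/\rho$, rewrite $\zeta^2(p_k)$ as $\rho^2$ times the curvature vector, and bound $\|r_{k,n}\|\le\sqrt n\,M_3\rho^3/6$. One cosmetic point: the paper later uses $P_{k,n}$ for the projected inverse-information matrix, so give the left inverse a different name to avoid a clash.
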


\begin{proof}
Substituting \eqref{eq:stacked-harmonic-model} into \eqref{eq:first-harmonic-OLS} gives
\begin{align}
\hat\zeta^{\,1}_{k,n}-\zeta^1(p_k)
&=
\big((\Phi^1_{k,n})^\top\Phi^1_{k,n}\big)^{-1}
(\Phi^1_{k,n})^\top\Phi^2_{k,n}\zeta^2(p_k) \notag\\
&\;\;+
\big((\Phi^1_{k,n})^\top\Phi^1_{k,n}\big)^{-1}
(\Phi^1_{k,n})^\top r_{k,n}.
\end{align}
Premultiplying by $E/\rho$ gives
\begin{equation}
\hat g^{\,1}_{k,n}-\nabla F(p_k)
=
\frac{1}{\rho}B_{k,n}\zeta^2(p_k)
+\varepsilon_{k,n}.
\end{equation}
Since
\begin{equation}
\zeta^2(p_k)
=
\rho^2
\begin{bmatrix}
(F_{xx}(p_k)-F_{yy}(p_k))/4\\
F_{xy}(p_k)/2
\end{bmatrix},
\end{equation}
\eqref{eq:leakage} follows. Finally, $|r_{p_k}(\alpha_{k,i})|\leq M_3\rho^3/6$ implies $\|r_{k,n}\|\leq\sqrt n\,M_3\rho^3/6$, which gives \eqref{eq:first-harmonic-remainder-bound}.
\end{proof}

For an equally spaced \textit{complete-circle scan} with $n\geq5$, the first and second angular harmonics are orthogonal, so $(\Phi^1_{k,n})^\top\Phi^2_{k,n}=0$ and hence $B_{k,n}=0$. The curvature term in \eqref{eq:leakage} then vanishes, leaving only the $O(\rho^2)$ Taylor-remainder error. For a generic \textit{partial scan}, however, $(\Phi^1_{k,n})^\top\Phi^2_{k,n}\neq0$. For a fixed, well-conditioned scan geometry, the second-harmonic curvature terms therefore produce an additional error of order $O(\rho)$.

Thus, the difficulty with first-harmonic regression under a partial scan is not simply that fewer measurements are available. The loss of harmonic orthogonality causes local curvature to be interpreted as gradient information. This $O(\rho)$ term is larger than the $\rho^2$ lower-bound scale in Lemma~\ref{lem:ambiguity}, which motivates estimating the first and second harmonics jointly in the next section.

\section{Gradient Estimation from Partial Scans}\label{sec:estimation}

\subsection{Second-Order Harmonic Regression}

To remove the curvature leakage identified in Proposition~\ref{prop:leakage}, we retain both the first and second angular harmonics in the regression. The remaining deterministic model mismatch is the third-order Taylor remainder, whose magnitude is $O(\rho^3)$. Define
\begin{equation}
z(\alpha)=
\begin{bmatrix}
\cos\alpha\\
\sin\alpha\\
\cos2\alpha\\
\sin2\alpha
\end{bmatrix},
\qquad
\xi(p)=
\begin{bmatrix}
\rho F_x(p)\\
\rho F_y(p)\\
c_{\rm c}(p)\\
c_{\rm s}(p)
\end{bmatrix}.
\label{eq:second-order-parameter}
\end{equation}
Then \eqref{eq:harmonic} can be written as
\begin{equation}
    F(p+\rho\varphi(\alpha))=c_0(p)+z(\alpha)^\top\xi(p)+r_p(\alpha).
\end{equation}
Since $\nabla F$ is Lipschitz and $F\in C^3$, $\|\nabla^2F(p)\|\leq L$. Together with $\|\nabla F(p)\|\leq G$ and the definitions of $c_{\rm c}$ and $c_{\rm s}$, this gives
\begin{equation}
\|\xi(p)\|\leq S_\xi:=\sqrt{\rho^2G^2+\frac{\rho^4L^2}{4}}.
\label{eq:Sxi}
\end{equation}
Thus, the four harmonic coefficients collected in $\xi(p)$ have a known bound.

The offset $c_0(p_k)$ is different: adding a constant to $F$ changes $c_0$ but does not change the gradient or the source. Hence $c_0(p_k)$ need \textit{not} be bounded and should not be regularized. We separate this unrestricted offset from the bounded harmonic coefficients by anchoring the regression at the first scan bearing. Define
\begin{equation}
a_k=c_0(p_k)+z(\alpha_{k,1})^\top\xi(p_k),\quad
\vartheta_k=
\begin{bmatrix}
a_k\\
\xi(p_k)
\end{bmatrix}\in\R^5,
\label{eq:anchored-parameter}
\end{equation}
and
\begin{equation}
\psi_{k,1}=e_1,\quad
\psi_{k,i}=
\begin{bmatrix}
1\\
z(\alpha_{k,i})-z(\alpha_{k,1})
\end{bmatrix},\quad i\geq2,
\label{eq:anchored-regressor}
\end{equation}
where $e_1=[1~0~0~0~0]^\top$. One gets the reparameterization: for every $i\ge 1$,
\begin{equation}
\mathsf{y}_{k,i}=\psi_{k,i}^\top\vartheta_k+r_{p_k}(\alpha_{k,i})+w_{k,i}, 
\label{eq:anchored-model}
\end{equation}
where $|r_{p_k}(\alpha_{k,i})|\leq b_\rho:={M_3\rho^3}/{6}$.

For $\lambda>0$, writing $\vartheta=[a~\xi^\top]^\top$, we estimate $\vartheta_k$ by partially regularized least squares:
\begin{equation}
\hat\vartheta_{k,n}=\arg\min_{\vartheta\in\R^5}
\left\{\sum_{i=1}^{n}\big(\mathsf{y}_{k,i}-\psi_{k,i}^\top\vartheta\big)^2+\lambda\|\xi\|^2\right\}.
\label{eq:regularized-ls}
\end{equation}
Thus, the offset $a$ is left unregularized, while the four harmonic coefficients in $\xi$ are regularized. Define
\begin{equation}
\Lambda_\lambda:=\operatorname{diag}(0,\lambda I_4),\quad
V_{k,n}:=\Lambda_\lambda+\sum_{i=1}^{n}\psi_{k,i}\psi_{k,i}^\top.
\label{eq:information-matrix}
\end{equation}
The solution of \eqref{eq:regularized-ls} is
\begin{equation}
\boxed{\hat\vartheta_{k,n}=V_{k,n}^{-1}\sum_{i=1}^{n}\psi_{k,i}\mathsf{y}_{k,i}.}
\label{eq:second-order-estimator}
\end{equation}
Because $\psi_{k,1}=e_1$, $V_{k,1}=\operatorname{diag}(1,\lambda I_4)\succ0$, so $V_{k,n}$ is nonsingular for every $n\geq1$.
Thus, the estimate is defined from the first measurement onward; when little scan information is available, the uncertainty is reflected by the confidence set introduced below rather than by requiring a full-rank scan.

The first two components of $\xi(p_k)$ are $\rho\nabla F(p_k)$. Define
\begin{equation}
G_\rho:=\frac{1}{\rho}\begin{bmatrix}0&I_2&0_{2\times2}\end{bmatrix},\quad
\boxed{\hat g_{k,n}:=G_\rho\hat\vartheta_{k,n}.}
\label{eq:gradient-estimate}
\end{equation}
Then $\hat g_{k,n}$ is the gradient estimate at $p_k$ in the episode-local frame.

\subsection{Gradient Confidence Sets}

A point estimate alone is not sufficient for deciding when to stop a partial scan: the stopping time depends on the measurements collected so far. We therefore construct a \textit{confidence set} that contains the true gradient simultaneously for all within-episode sample sizes. Define
\begin{equation}
P_{k,n}:=G_\rho V_{k,n}^{-1}G_\rho^\top.
\label{eq:Pkn}
\end{equation}
The matrix $P_{k,n}$ is the inverse-information matrix projected onto the two gradient coordinates and determines the shape of the gradient uncertainty.

Let $\delta_k\in(0,1)$ be an episode-wise failure probability and define
\begin{equation}
\begin{aligned}
\beta_{k,n}:={}&\sigma\sqrt{2\log\frac{4}{\delta_k}}
+\sigma\sqrt{2\log\!\left(\frac{2\det(V_{k,n})^{1/2}}{\delta_k\det(V_{k,1})^{1/2}}\right)}\\
&+\sqrt{\lambda}\,S_\xi+b_\rho\sqrt n.
\end{aligned}
\label{eq:beta}
\end{equation}
The first two terms in $\beta_{k,n}$ account for measurement noise, the third for regularization bias, and the last for the third-order Taylor remainder. The corresponding gradient confidence set is
\begin{equation}
\boxed{\mathcal C_{k,n}:=\left\{g\in\R^2:(g-\hat g_{k,n})^\top P_{k,n}^{-1}(g-\hat g_{k,n})\leq\beta_{k,n}^2\right\}.}
\label{eq:gradient-confidence-set}
\end{equation}
Thus, $\hat g_{k,n}$ is the center of the ellipsoid, $P_{k,n}$ describes its directional shape, and $\beta_{k,n}$ sets its overall size.

\begin{theorem}[Anytime gradient confidence set]\label{thm:anytime-confidence}
Under Assumptions~\ref{ass:main} and~\ref{ass:noise-motion}, for every episode $k$,
\begin{equation}
\mathbb P\!\left(
\nabla F(p_k)\in\mathcal C_{k,n}\text{ for all }n\geq1
\,\middle|\,\mathcal F_{k,0}
\right)\geq1-\delta_k,
\label{eq:anytime-confidence}
\end{equation}
where the gradient is expressed in the episode-local frame. 
Since the guarantee holds for all $n$ simultaneously, the confidence set remains valid at any stopping time $\tau_k$.
\end{theorem}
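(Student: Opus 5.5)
Write $S_n=\sum_{i=1}^n\psi_{k,i}w_{k,i}$ and $R_n=\sum_{i=1}^n\psi_{k,i}r_{p_k}(\alpha_{k,i})$. Substituting the anchored model \eqref{eq:anchored-model} into \eqref{eq:second-order-estimator} gives the error decomposition
\begin{equation*}
\hat\vartheta_{k,n}-\vartheta_k=V_{k,n}^{-1}\big(S_n+R_n-\Lambda_\lambda\vartheta_k\big).
\end{equation*}
The term $\Lambda_\lambda\vartheta_k=[0~\lambda\xi(p_k)^\top]^\top$ does not involve the unbounded offset $a_k$, so $\|\Lambda_\lambda\vartheta_k\|_{V_{k,n}^{-1}}\le\sqrt\lambda\|\xi(p_k)\|\le\sqrt\lambda S_\xi$, using $V_{k,n}\succeq\Lambda_\lambda$ on the $\xi$-block together with \eqref{eq:Sxi}. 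Since $\nabla F(p_k)=G_\rho\vartheta_k$, Cauchy--Schwarz in the $V_{k,n}^{-1}$ geometry gives, for any $u$,
\begin{equation*}
(G_\rho V_{k,n}^{-1}u)^\top P_{k,n}^{-1}(G_\rho V_{k,n}^{-1}u)\le\|u\|^2_{V_{k,n}^{-1}}.
\end{equation*}
To see this, take $x=V^{-1/2}G^\top P^{-1}GV^{-1}u$; the left side equals $\langle V^{-1/2}u,\;V^{-1/2}G^\top P^{-1}GV^{-1/2}\,V^{-1/2}u\rangle$, and $V^{-1/2}G^\top P^{-1}GV^{-1/2}$ is an orthogonal projection. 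It therefore suffices to show that, with probability at least $1-\delta_k$, $\|S_n+R_n-\Lambda_\lambda\vartheta_k\|_{V_{k,n}^{-1}}\le\beta_{k,n}$ for all $n$. By the triangle inequality, this splits into three bounds.

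\emph{Remainder term.} Write $R_n=\Psi_n^\top r$ with $\|r\|\le b_\rho\sqrt n$. Then
\begin{equation*}
\|\Psi_n^\top r\|_{V^{-1}}^2=r^\top\Psi_nV^{-1}\Psi_n^\top r\le\|r\|^2,
\end{equation*}
because $\Psi_nV^{-1}\Psi_n^\top\preceq I$ whenever $V\succeq\Psi_n^\top\Psi_n$. This gives $b_\rho\sqrt n$ deterministically, for every $n$.

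\emph{Noise term: the main obstacle.} The standard self-normalized martingale bound of Abbasi-Yadkori et al.\ requires a positive definite regularizer, but here $\Lambda_\lambda$ is singular in the offset direction. I will handle this by using the first sample. Since $\psi_{k,1}=e_1$, we have $V_{k,1}=\operatorname{diag}(1,\lambda I_4)\succ0$. Split $S_n=e_1w_{k,1}+S'_n$, where $S'_n=\sum_{i\ge2}\psi_{k,i}w_{k,i}$. Because $V_{k,n}\succeq e_1e_1^\top$, $\|e_1w_{k,1}\|_{V_{k,n}^{-1}}\le|w_{k,1}|$, and the sub-Gaussian tail gives $|w_{k,1}|\le\sigma\sqrt{2\log(4/\delta_k)}$ with probability at least $1-\delta_k/2$. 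For $S'_n$, the regressors $\psi_{k,i}$ are $\mathcal F_{k,i-1}$-measurable by Assumption~\ref{ass:noise-motion}, since $\alpha_{k,1}$ is fixed before sample $2$. I therefore apply the self-normalized bound conditionally on $\mathcal F_{k,1}$, with the positive definite regularizer $V_{k,1}$ and the design $\{\psi_{k,i}\}_{i\ge2}$, whose Gram matrix plus $V_{k,1}$ equals $V_{k,n}$. This yields, simultaneously for all $n$ via Ville's maximal inequality on the mixture supermartingale,
\begin{equation*}
\|S'_n\|^2_{V_{k,n}^{-1}}\le2\sigma^2\log\!\Big(\tfrac{2\det(V_{k,n})^{1/2}}{\delta_k\det(V_{k,1})^{1/2}}\Big)
\end{equation*}
with probability at least $1-\delta_k/2$.

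\emph{Combining.} A union bound over the two noise events gives total failure probability at most $\delta_k$, conditional on $\mathcal F_{k,0}$, and it reproduces exactly the terms of \eqref{eq:beta}. Because the event holds uniformly in $n$, membership $\nabla F(p_k)\in\mathcal C_{k,\tau_k}$ follows at any stopping time $\tau_k$. The delicate points to check are the conditioning on $\mathcal F_{k,1}$, which makes $V_{k,1}$ fixed, and the projection inequality from the $5$-dimensional weighted norm to the $2$-dimensional ellipsoid defined by $P_{k,n}$.
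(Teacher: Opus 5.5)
Your proposal is correct and follows essentially the same route as the paper. It uses the same three-term error decomposition, the same split of the noise into the first sample (bounded via $V_{k,n}\succeq e_1e_1^\top$ and a sub-Gaussian tail) and the $i\geq2$ martingale (handled by the self-normalized bound with regularizer $V_{k,1}$), the same $b_\rho\sqrt n$ and $\sqrt\lambda\,S_\xi$ bounds, and the same projection inequality; your version with $u=V_{k,n}e$ is equivalent to \eqref{eq:projected-ellipsoid}, and the stray definition of $x$ in that step is never used and can be deleted.
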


\begin{proof}
Suppress the episode index and write $r_i=r_{p_k}(\alpha_{k,i})$. From \eqref{eq:anchored-model} and \eqref{eq:second-order-estimator},
\begin{equation}
V_n(\hat\vartheta_n-\vartheta)
=\sum_{i=1}^{n}\psi_iw_i+\sum_{i=1}^{n}\psi_ir_i-\Lambda_\lambda\vartheta.
\label{eq:parameter-error}
\end{equation}
The three terms on the right are, respectively, the measurement-noise contribution, the Taylor-remainder contribution, and the regularization bias.

Since $V_n\succeq V_1$, $\|\psi_1w_1\|_{V_n^{-1}}\leq|w_1|$. The conditional sub-Gaussian assumption gives $|w_1|\leq\sigma\sqrt{2\log(4/\delta_k)}$ with probability at least $1-\delta_k/2$. Applying a self-normalized martingale inequality \cite{NIPS2011_e1d5be1c} to the predictable regressors $\psi_i$, $i\geq2$, gives, simultaneously for all $n$,
\begin{equation}
\left\|\sum_{i=2}^{n}\psi_iw_i\right\|_{V_n^{-1}}
\leq
\sigma\sqrt{2\log\!\left(\frac{2\det(V_n)^{1/2}}{\delta_k\det(V_1)^{1/2}}\right)}
\label{eq:self-normalized-bound}
\end{equation}
with probability at least $1-\delta_k/2$.

Let $\Psi_n$ have rows $\psi_i^\top$ and let $r_n=[r_1~\cdots~r_n]^\top$. Since $V_n\succeq\Psi_n^\top\Psi_n$ and $|r_i|\leq b_\rho$,
\begin{equation}
\left\|\sum_{i=1}^{n}\psi_ir_i\right\|_{V_n^{-1}}\leq\|r_n\|\leq b_\rho\sqrt n.
\label{eq:remainder-confidence-bound}
\end{equation}
Moreover, since only $\xi(p_k)$ is regularized,
\begin{equation}
\|\Lambda_\lambda\vartheta\|_{V_n^{-1}}\leq\sqrt{\lambda}\,\|\xi(p_k)\|\leq\sqrt{\lambda}\,S_\xi.
\label{eq:regularization-confidence-bound}
\end{equation}
Combining the three bounds yields, with probability at least $1-\delta_k$,
\begin{equation}
\|\hat\vartheta_n-\vartheta\|_{V_n}\leq\beta_{k,n}\qquad\text{for all }n\geq1.
\label{eq:parameter-confidence}
\end{equation}

Finally, the gradient is the linear projection $G_\rho\vartheta$. For every $e\in\R^5$,
\begin{equation}
(G_\rho e)^\top\big(G_\rho V_n^{-1}G_\rho^\top\big)^{-1}(G_\rho e)\leq e^\top V_ne.
\label{eq:projected-ellipsoid}
\end{equation}
Applying \eqref{eq:projected-ellipsoid} to $e=\hat\vartheta_n-\vartheta$ gives \eqref{eq:anytime-confidence}. Because the event holds simultaneously for all $n$, it also holds at any stopping time $\tau_k$.
\end{proof}

The confidence set \eqref{eq:gradient-confidence-set} therefore combines the information in the current partial scan with measurement noise, regularization, and the Taylor remainder. Section~\ref{sec:strategy} uses this set to determine when the scan contains enough information to guarantee a descent direction or to declare approximate stationarity.

\section{Partial-Scan-and-Move Strategy}\label{sec:strategy}

\subsection{Descent Direction and Scan Decisions}

The confidence set in \eqref{eq:gradient-confidence-set} contains the gradients that remain consistent with the measurements at the prescribed confidence level. This subsection addresses two questions: whether the current scan already provides enough information to stop, and, if motion is justified, which direction guarantees descent for every gradient in the confidence set. Define
\begin{equation}
\gamma^-_{k,n}:=\min_{g\in\mathcal C_{k,n}}\|g\|,\qquad
\gamma^+_{k,n}:=\max_{g\in\mathcal C_{k,n}}\|g\|.
\label{eq:gamma-minus-plus}
\end{equation}
Thus, $\gamma^-_{k,n}$ and $\gamma^+_{k,n}$ are the smallest and largest gradient magnitudes allowed by the current confidence set. If $\gamma^+_{k,n}\leq\varepsilon$, then every admissible gradient has norm at most $\varepsilon$; on the confidence event, this establishes approximate \textit{stationarity}. Otherwise, motion requires a direction that decreases the field for every gradient in the set.

\begin{proposition}[Descent from a gradient confidence set]\label{prop:confidence-descent}
Let $\mathcal C\subset\R^2$ be nonempty, compact, and convex, and let
\begin{equation}
g^\sharp=\operatorname{proj}_{\mathcal C}(0):=\arg\min_{g\in\mathcal C}\|g\|.
\label{eq:gradient-projection}
\end{equation}
If $g^\sharp\neq0$, then
\begin{equation}
d^\sharp:=-\frac{g^\sharp}{\|g^\sharp\|}
\label{eq:descent-direction}
\end{equation}
satisfies
\begin{equation}
g^\top d^\sharp\leq-\min_{z\in\mathcal C}\|z\|,\quad \forall g\in\mathcal C.
\label{eq:common-descent}
\end{equation}
Moreover, for any $\eta\in(0,1)$, there exists a unit direction $d$ satisfying $g^\top d\leq-\eta\max_{z\in\mathcal C}\|z\|$ for every $g\in\mathcal C$ if and only if
\begin{equation}
\min_{g\in\mathcal C}\|g\|\geq\eta\max_{g\in\mathcal C}\|g\|.
\label{eq:relative-descent-condition}
\end{equation}
\end{proposition}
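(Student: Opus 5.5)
The plan is to use the variational characterization of the Euclidean projection onto a closed convex set; the two directions of the equivalence then follow from short convexity arguments.

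\emph{First claim.} Since $\mathcal C$ is nonempty, compact and convex, $g^\sharp=\operatorname{proj}_{\mathcal C}(0)$ exists and is unique. It satisfies the optimality condition
\begin{equation*}
(0-g^\sharp)^\top(g-g^\sharp)\leq 0\quad\text{for all } g\in\mathcal C .
\end{equation*}
Equivalently, $g^\top g^\sharp\geq\|g^\sharp\|^2$ for every $g\in\mathcal C$. To verify this, differentiate $t\mapsto\|g^\sharp+t(g-g^\sharp)\|^2$ at $t=0^+$; minimality of $g^\sharp$ forces this derivative to be nonnegative. Dividing by $\|g^\sharp\|>0$ gives
\begin{equation*}
g^\top d^\sharp\leq-\|g^\sharp\|=-\min_{z\in\mathcal C}\|z\|,
\end{equation*}
which is \eqref{eq:common-descent}.

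\emph{Sufficiency in the equivalence.} Assume \eqref{eq:relative-descent-condition} holds. If $\max_{\mathcal C}\|g\|=0$, then $\mathcal C=\{0\}$, and any unit $d$ trivially satisfies $g^\top d\leq 0$. Otherwise $\min_{\mathcal C}\|g\|\geq\eta\max_{\mathcal C}\|g\|>0$, so $g^\sharp\neq0$. We take $d=d^\sharp$, and the first claim gives
\begin{equation*}
g^\top d^\sharp\leq-\min\|z\|\leq-\eta\max\|z\| .
\end{equation*}

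\emph{Necessity in the equivalence.} Suppose a unit $d$ satisfies $g^\top d\leq-\eta\max_{\mathcal C}\|z\|$ for all $g\in\mathcal C$. By Cauchy--Schwarz, $-\|g\|\leq g^\top d$, so $\|g\|\geq -g^\top d\geq\eta\max\|z\|$ for every $g\in\mathcal C$. Taking the minimum over $g$ yields \eqref{eq:relative-descent-condition}.

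The main obstacle, though a mild one, is the careful derivation of the projection inequality, including existence via compactness and the one-sided derivative argument. The degenerate case $\max\|z\|=0$ also needs separate handling, since there $g^\sharp=0$ and $d^\sharp$ is undefined. Everything else is a direct chain of inequalities.
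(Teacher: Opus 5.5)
Your proof is correct and follows essentially the same route as the paper. Both use the projection optimality condition to get the common descent bound, take $d=d^\sharp$ for sufficiency, and use Cauchy--Schwarz for necessity; the paper applies it only at $g^\sharp$ while you apply it at every $g\in\mathcal C$, which is equivalent, and your explicit handling of the degenerate case $\mathcal C=\{0\}$, left implicit in the paper, is a welcome addition.
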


\begin{proof}
The projection optimality condition gives $(g-g^\sharp)^\top g^\sharp\geq0$ for every $g\in\mathcal C$. Hence $g^\top d^\sharp\leq-\|g^\sharp\|$, and $\|g^\sharp\|=\min_{g\in\mathcal C}\|g\|$. Conversely, for any unit $d$,
$\min_{g\in\mathcal C}(-g^\top d)\leq-(g^\sharp)^\top d\leq\|g^\sharp\|$.
Thus, \eqref{eq:descent-direction} gives the largest worst-case descent margin, and \eqref{eq:relative-descent-condition} follows.
\end{proof}

The parameter $\eta\in(0,1)$ controls how much confidence is required before the robot is allowed to move. A larger $\eta$ requires a stronger guaranteed decrease of the field and therefore typically leads to a longer scan.

\textbf{Scan decision:} Applied to $\mathcal C_{k,n}$, Proposition~\ref{prop:confidence-descent} gives the following scan decision. Continue scanning until either
\begin{equation}
\boxed{\gamma^+_{k,n}\leq\varepsilon
\quad\text{or}\quad
\gamma^-_{k,n}\geq\eta\gamma^+_{k,n}.}
\label{eq:scan-stopping-rule}
\end{equation}
The first condition declares approximate stationarity and takes priority if both conditions hold. The second condition means that the current scan is sufficiently informative to guarantee a descent direction with \textit{relative margin} $\eta$. In that case, let $g^\sharp_{k,n}=\operatorname{proj}_{\mathcal C_{k,n}}(0)$ and use
\begin{equation}
d_{k,n}:=-\frac{g^\sharp_{k,n}}{\|g^\sharp_{k,n}\|}.
\label{eq:episode-descent-direction}
\end{equation}
By Theorem~\ref{thm:anytime-confidence} and Proposition~\ref{prop:confidence-descent}, when the true gradient belongs to the confidence set,
\begin{equation}
\nabla F(p_k)^\top d_{k,n}\leq-\gamma^-_{k,n}.
\label{eq:true-gradient-descent}
\end{equation}
Thus, the scan stops for one of two reasons: either the gradient is known to be small enough, or the measurements identify a direction that is guaranteed to decrease the field\footnote{Both vectors in \eqref{eq:true-gradient-descent} are expressed in the episode-local frame, so the required rotation is relative to the heading at the beginning of the episode and no absolute heading is needed.}. 

\subsection{Stopping a Partial Scan}

The scan decision in \eqref{eq:scan-stopping-rule} is useful only if it can be reached without completing a full $2\pi$ rotation. We therefore prescribe a maximum scan angle $\bar\alpha\in(0,2\pi)$ and show when the available information is guaranteed to be sufficient within that arc.

\textbf{Scan strategy:} Fix an integer $m\geq5$. In every episode, take the first measurement at the current heading and use the candidate relative bearings
\begin{equation}
\alpha_{k,i}=\frac{i-1}{m-1}\bar\alpha,\qquad i=1,\ldots,m.
\label{eq:partial-arc-schedule}
\end{equation}
The scan is checked after every measurement and stops as soon as \eqref{eq:scan-stopping-rule} holds. Thus, $m$ is the maximum number of prescribed samples, and $\bar\alpha<2\pi$ is the maximum rotation within an episode.\hfill $\blacktriangle$

Let $\Psi_m(\bar\alpha)$ be the matrix whose rows are the anchored regressors in \eqref{eq:anchored-regressor} evaluated at the bearings in \eqref{eq:partial-arc-schedule}, and let
$V_m(\bar\alpha)=\Lambda_\lambda+\Psi_m^\top\Psi_m$ and
$V_1=\operatorname{diag}(1,\lambda I_4)$. Define
\begin{equation}
\kappa_m(\bar\alpha):=\lambda_{\min}\!\left(\frac{1}{m}\Psi_m^\top\Psi_m\right),\;
D_m(\bar\alpha):=\frac{\det V_m(\bar\alpha)}{\det V_1}.
\label{eq:partial-design-quantities}
\end{equation}
The quantity $\kappa_m(\bar\alpha)$ measures how well the prescribed bearings excite the five regression coefficients: a small value indicates a poorly conditioned scan geometry. The determinant ratio $D_m(\bar\alpha)$ measures the information growth appearing in the noise bound of Theorem~\ref{thm:anytime-confidence}. For $\bar\alpha>0$ and $m\geq5$, the five-parameter harmonic design is full rank at the distinct bearings in \eqref{eq:partial-arc-schedule}, so $\kappa_m(\bar\alpha)>0$. Both quantities depend only on the prescribed scan geometry and can be computed offline. For $\delta\in(0,1)$, define
\begin{equation}
\begin{aligned}
\overline\beta_m(\bar\alpha,\delta):={}&\sigma\sqrt{2\log\frac{4}{\delta}}
+\sigma\sqrt{2\log\!\left(\frac{2D_m(\bar\alpha)^{1/2}}{\delta}\right)}\\
&+\sqrt{\lambda}\,S_\xi+b_\rho\sqrt m,
\end{aligned}
\label{eq:beta-bar}
\end{equation}
and
\begin{equation}
\overline\varrho_m(\bar\alpha,\delta):=
\frac{\overline\beta_m(\bar\alpha,\delta)}
{\rho\sqrt{m\kappa_m(\bar\alpha)}}.
\label{eq:partial-radius-bound}
\end{equation}
The bound $\overline\beta_m$ collects the effects of measurement noise, regularization, and the Taylor remainder after at most $m$ samples. The quantity $\overline\varrho_m$ converts this bound into a Euclidean bound on the gradient uncertainty. In particular, after the $m$ prescribed samples, every gradient in the confidence set lies within $\overline\varrho_m$ of its center. Hence $\overline\varrho_m$ can be evaluated before the experiment to check whether a proposed partial-scan design is sufficiently informative.

\begin{theorem}[Partial-Scan Termination]\label{thm:partial-scan-termination}
Fix $\varepsilon>0$, $\eta\in(0,1)$, and $\delta_k\in(0,1)$. If the scan parameters satisfy
\begin{equation}
\overline\varrho_m(\bar\alpha,\delta_k)\leq\frac{(1-\eta)\varepsilon}{4},
\label{eq:partial-resolution-condition}
\end{equation}
then, on the event that
$\nabla F(p_k)\in\mathcal C_{k,n}$ for all $n\geq1$, episode $k$ satisfies one of the two conditions in \eqref{eq:scan-stopping-rule} by sample $m$. Hence $\tau_k\leq m$, and the total scan angle is at most $\bar\alpha<2\pi$.
\end{theorem}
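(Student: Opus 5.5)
Suppose episode $k$ has not stopped before sample $m$. We show that at $n=m$ one of the two tests in \eqref{eq:scan-stopping-rule} must hold. The argument has two parts: a radius bound on the confidence set at $n=m$, and a case split on the size of the true gradient.

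\emph{Radius bound.} At $n=m$ all $m$ prescribed bearings in \eqref{eq:partial-arc-schedule} have been used, so $V_{k,m}=V_m(\bar\alpha)$. The ratio $\det V_{k,m}/\det V_{k,1}$ then equals $D_m(\bar\alpha)$, and comparing \eqref{eq:beta} with \eqref{eq:beta-bar} gives $\beta_{k,m}=\overline\beta_m(\bar\alpha,\delta_k)$. For the shape matrix, $V_{k,m}\succeq\Psi_m^\top\Psi_m\succeq m\kappa_m(\bar\alpha)I_5$, so $V_{k,m}^{-1}\preceq (m\kappa_m)^{-1}I_5$. Since $G_\rho G_\rho^\top=\rho^{-2}I_2$, this yields $P_{k,m}\preceq (\rho^2 m\kappa_m)^{-1}I_2$. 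Every $g\in\mathcal C_{k,m}$ therefore satisfies
\[
\|g-\hat g_{k,m}\|\leq\beta_{k,m}\sqrt{\lambda_{\max}(P_{k,m})}\leq\overline\varrho_m(\bar\alpha,\delta_k)=:r.
\]
Hence $\mathcal C_{k,m}$ lies in a ball of radius $r$. On the given event it also contains $g^\star:=\nabla F(p_k)$, so every $g\in\mathcal C_{k,m}$ satisfies $\|g-g^\star\|\leq 2r$. This gives
\[
\gamma^+_{k,m}\leq\|g^\star\|+2r,\qquad \gamma^-_{k,m}\geq\|g^\star\|-2r,
\]
with $2r\leq(1-\eta)\varepsilon/2$ by \eqref{eq:partial-resolution-condition}.

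\emph{Case split.} Set $t:=\|g^\star\|$.

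If $t\leq\varepsilon-2r$, then $\gamma^+_{k,m}\leq\varepsilon$ and the stationarity test holds.

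If $t>\varepsilon-2r$, we check $\gamma^-\geq\eta\gamma^+$. It suffices to show $t-2r\geq\eta(t+2r)$, that is, $(1-\eta)t\geq 2r(1+\eta)$. Using $t>\varepsilon-2r$, it is enough that $(1-\eta)\varepsilon\geq 2r(1-\eta)+2r(1+\eta)=4r$, which is exactly \eqref{eq:partial-resolution-condition}. In this case $t-2r\geq\eta(t+2r)>0$, so the descent test holds with $\gamma^->0$ and the direction $d_{k,n}$ is well defined.

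\emph{Conclusion.} Both cases produce a stopping condition by sample $m$, so $\tau_k\leq m$. The rotation is at most $\alpha_{k,m}=\bar\alpha<2\pi$.

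The case split is elementary. The step needing care is the radius bound, in particular the identities $\beta_{k,m}=\overline\beta_m(\bar\alpha,\delta_k)$ and $V_{k,m}\succeq\Psi_m^\top\Psi_m$. Both hold because the bearings in \eqref{eq:partial-arc-schedule} are deterministic, so $V_{k,m}$ is exactly the offline matrix $V_m(\bar\alpha)$. The bound $V_{k,m}\succeq\Psi_m^\top\Psi_m$ holds because $\Lambda_\lambda\succeq0$.
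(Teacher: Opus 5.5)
Your proposal is correct and follows the paper's proof essentially step for step. You bound the Euclidean radius of $\mathcal C_{k,m}$ by $\overline\varrho_m(\bar\alpha,\delta_k)$ using $V_{k,m}\succeq m\kappa_m(\bar\alpha)I_5$ and the triangle inequality on the confidence event, then split on whether $\|\nabla F(p_k)\|\leq\varepsilon-2\varrho$; your remark that the deterministic schedule gives $V_{k,m}=V_m(\bar\alpha)$, and hence $\beta_{k,m}=\overline\beta_m(\bar\alpha,\delta_k)$, just spells out what the paper takes from \eqref{eq:beta}--\eqref{eq:partial-radius-bound}.
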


\begin{proof}
Let $\varrho_{k,m}:=\beta_{k,m}\sqrt{\lambda_{\max}(P_{k,m})}$ denote the Euclidean radius of the gradient confidence ellipsoid at sample $m$. From \eqref{eq:partial-design-quantities},
$V_{k,m}\succeq m\kappa_m(\bar\alpha)I_5$, and hence $\lambda_{\max}(P_{k,m})\leq{1}/{(\rho^2m\kappa_m(\bar\alpha))}$.
Together with \eqref{eq:beta}--\eqref{eq:partial-radius-bound}, this gives
$\varrho_{k,m}\leq\overline\varrho_m(\bar\alpha,\delta_k)$.

Let $g_k=\nabla F(p_k)$ in the episode-local frame. On the confidence event, both $g_k$ and every $g\in\mathcal C_{k,m}$ are within $\varrho_{k,m}$ of $\widehat g_{k,m}$. Thus every $g\in\mathcal C_{k,m}$ satisfies $\|g-g_k\|\leq2\varrho_{k,m}$, which yields
\begin{equation}
\gamma^+_{k,m}\leq\|g_k\|+2\varrho_{k,m},\;
\gamma^-_{k,m}\geq\max\{0,\|g_k\|-2\varrho_{k,m}\}.
\label{eq:gamma-radius-bounds}
\end{equation}

If $\|g_k\|\leq\varepsilon-2\varrho_{k,m}$, then \eqref{eq:gamma-radius-bounds} gives $\gamma^+_{k,m}\leq\varepsilon$, so the first stopping condition holds. Otherwise,
$\|g_k\|>\varepsilon-2\varrho_{k,m}$. Using
$\varrho_{k,m}\leq(1-\eta)\varepsilon/4$ gives $(1-\eta)\|g_k\|>2(1+\eta)\varrho_{k,m}$,
and therefore \eqref{eq:gamma-radius-bounds} implies
$\gamma^-_{k,m}\geq\eta\gamma^+_{k,m}$. Hence the second stopping condition holds. In either case, the scan must stop by sample $m$.
\end{proof}

Condition~\eqref{eq:partial-resolution-condition} is an offline design requirement: it asks that the gradient uncertainty after the prescribed partial scan be small enough to distinguish approximate stationarity from a reliable descent direction. For fixed $\varepsilon$, $\eta$, and $\delta_k$, it can be checked directly when choosing $m$, $\bar\alpha$, and $\lambda$.

\subsection{Finite-Move Convergence}

\textbf{Move strategy:} When the second condition in \eqref{eq:scan-stopping-rule} holds at the stopping time $\tau_k$, rotate toward $d_k:=d_{k,\tau_k}$ and translate by
\begin{equation}
\ell_k:=\frac{\gamma^-_{k,\tau_k}}{L},
\label{eq:move-length}
\end{equation}
where $L$ is the Lipschitz constant in Assumption~\ref{ass:main}. The new center is denoted by $p_{k+1}=p_k+\ell_k d_k$. If the first condition in \eqref{eq:scan-stopping-rule} holds, the algorithm terminates at $p_k$.\hfill $\blacktriangle$

\begin{theorem}[Finite-Move Convergence]\label{thm:finite-move}
Let
\begin{equation}
K_{\max}:=\left\lceil\frac{2L\Delta_0}{\eta^2\varepsilon^2}\right\rceil,
\quad
\delta_k:=\frac{\delta}{K_{\max}+1}
\label{eq:episode-confidence-allocation}
\end{equation}
for $k=0,\ldots,K_{\max}$.
Suppose the scan parameters $(\bar\alpha,m)$ satisfy \eqref{eq:partial-resolution-condition} for the choice of $\delta_k$ in \eqref{eq:episode-confidence-allocation}. Then, with probability at least $1-\delta$, every episode stops after at most $m$ measurements and a scan angle no greater than $\bar\alpha<2\pi$, and the algorithm terminates after at most $K_{\max}$ translational moves at a point $p_K$ satisfying
\begin{equation}
\|\nabla F(p_K)\|\leq\varepsilon.
\label{eq:terminal-gradient}
\end{equation}
\end{theorem}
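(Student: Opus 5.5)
The proof combines three ingredients: a union bound over episodes, Theorem~\ref{thm:partial-scan-termination} applied in each episode, and a descent-lemma argument that bounds the number of moves.

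\textbf{Good event.} Define $\mathcal G:=\bigcap_{k=0}^{K_{\max}}\mathcal G_k$, where $\mathcal G_k$ is the event that $\nabla F(p_k)\in\mathcal C_{k,n}$ for all $n\geq1$. If the algorithm has already terminated before episode $k$, treat $\mathcal G_k$ as trivially true, or equivalently run a fictitious continuation.

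Theorem~\ref{thm:anytime-confidence} gives $\mathbb P(\mathcal G_k^c\mid\mathcal F_{k,0})\leq\delta_k$. Taking expectations and applying a union bound over the $K_{\max}+1$ episodes yields $\mathbb P(\mathcal G)\geq1-(K_{\max}+1)\delta_k=1-\delta$. The one technical point is that $p_k$ and $\theta_k$ are random, but they are $\mathcal F_{k,0}$-measurable, so the conditional statement applies.

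\textbf{Consequences on $\mathcal G$.} Theorem~\ref{thm:partial-scan-termination}, with \eqref{eq:partial-resolution-condition} holding for this $\delta_k$, shows that every episode $k\leq K_{\max}$ stops within $m$ samples, hence within scan angle $\bar\alpha$.

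If the stop is of the first type, then $\|\nabla F(p_k)\|\leq\gamma^+_{k,\tau_k}\leq\varepsilon$, so \eqref{eq:terminal-gradient} holds.

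If the stop is of the second type, then $\gamma^-_{k,\tau_k}\geq\eta\gamma^+_{k,\tau_k}$. Because the first condition failed, $\gamma^+_{k,\tau_k}>\varepsilon$, and so $\gamma^-_{k,\tau_k}>\eta\varepsilon$. By \eqref{eq:true-gradient-descent}, $\nabla F(p_k)^\top d_k\leq-\gamma^-_{k,\tau_k}$. The $L$-smooth descent lemma with step $\ell_k=\gamma^-_{k,\tau_k}/L$ then gives
\[
F(p_{k+1})\leq F(p_k)-\ell_k\gamma^-_{k,\tau_k}+\tfrac{L}{2}\ell_k^2=F(p_k)-\frac{(\gamma^-_{k,\tau_k})^2}{2L}<F(p_k)-\frac{\eta^2\varepsilon^2}{2L}.
\]
The descent lemma uses only the Lipschitz gradient on $\R^2$, so no sublevel-set argument is needed. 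The move direction lives in the episode-local frame, and the rotation is relative, as noted in the footnote; Assumption~\ref{ass:noise-motion} guarantees the move is executed as commanded.

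\textbf{Counting moves.} Suppose $K$ moves occur on $\mathcal G$. Telescoping the per-move decrease gives $K\eta^2\varepsilon^2/(2L)<F(p_0)-F(p_K)\leq F(p_0)-F(p^\star)\leq\Delta_0$. Hence $K<2L\Delta_0/(\eta^2\varepsilon^2)\leq K_{\max}$.

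Therefore one of the episodes $0,\dots,K_{\max}$ must end with a stop of the first type, and all of these episodes are covered by $\mathcal G$. The algorithm thus terminates after at most $K_{\max}$ moves at a point satisfying \eqref{eq:terminal-gradient}.

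The main obstacle is the rigorous union bound over a random number of episodes with adaptively chosen $p_k$. I would handle it by defining $\mathcal G_k$ for all $k\leq K_{\max}$ on an extended process, where the robot keeps scanning in place after termination. Each $\mathcal G_k$ then has conditional failure probability at most $\delta_k$ by the tower property, and summing gives the bound $\delta$.
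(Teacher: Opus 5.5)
Your proposal is correct and follows essentially the same route as the paper: a per-move descent-lemma decrease of $(\gamma^-_{k,\tau_k})^2/(2L)>\eta^2\varepsilon^2/(2L)$ that caps the number of moves through $\Delta_0$, a union bound over the $K_{\max}+1$ episode-wise events from Theorem~\ref{thm:anytime-confidence}, and Theorem~\ref{thm:partial-scan-termination} for the scan-length claim. Your extended-process and tower-property argument makes rigorous the treatment of the random, adaptively generated episodes, which the paper's one-line union bound leaves implicit; the only detail to tidy is to telescope over the first $\min\{K,K_{\max}\}$ moves, so that every episode used lies in $\mathcal G$, and your final paragraph already effectively does this.
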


\begin{proof}
On the event that the true gradient belongs to the confidence set throughout episode $k$, every nonterminal episode satisfies \eqref{eq:true-gradient-descent}. By the $L$-Lipschitz continuity of $\nabla F$ and \eqref{eq:move-length},
\begin{equation}
F(p_{k+1})\leq F(p_k)-\frac{(\gamma^-_{k,\tau_k})^2}{2L}.
\label{eq:field-decrease}
\end{equation}
Before termination, $\gamma^+_{k,\tau_k}>\varepsilon$ and $\gamma^-_{k,\tau_k}\geq\eta\gamma^+_{k,\tau_k}$, so every move decreases $F$ by more than $\eta^2\varepsilon^2/(2L)$. Since $F(p_0)-F(p^\star)\leq\Delta_0$, there can be at most $K_{\max}$ such moves. At the terminal episode, $\gamma^+_{K,\tau_K}\leq\varepsilon$; because the true gradient belongs to $\mathcal C_{K,\tau_K}$, \eqref{eq:terminal-gradient} follows.

At most $K_{\max}+1$ episode-wise confidence events are used. By Theorem~\ref{thm:anytime-confidence}, each fails with probability at most $\delta_k$ conditionally on the information available at the beginning of that episode. Therefore, the union bound and \eqref{eq:episode-confidence-allocation} imply that all required confidence events hold with probability at least $1-\delta$. The partial-scan statement then follows from Theorem~\ref{thm:partial-scan-termination}.
\end{proof}

Theorem \ref{thm:finite-move} gives the complete scan-and-move guarantee. Each nonterminal episode produces a field decrease bounded away from zero, so the initial field-value gap $\Delta_0$ limits the number of moves. The allocation of the prescribed failure probability $\delta$ across at most $K_{\max}+1$ episodes gives the overall probability guarantee, while Theorem~\ref{thm:partial-scan-termination} ensures that every decision is reached within a partial scan. Thus, with probability at least $1-\delta$, the strategy terminates after finitely many partial scans and moves at an $\varepsilon$-stationary point.

\section{Numerical Results}\label{sec:numerical}

We illustrate the proposed strategy on a planar unicycle moving in the smooth field
\begin{equation}
F(x,y)=a_1\log\cosh\!\left(\frac{q_1}{\ell_1}\right)+a_2\log\cosh\!\left(\frac{q_2}{\ell_2}\right),
\label{eq:simulation-field}
\end{equation}
where
\begin{equation}
\begin{bmatrix}
q_1\\q_2
\end{bmatrix}
=
R_\phi^\top\left(
\begin{bmatrix}
x\\y
\end{bmatrix}
-p^\star\right),\quad
R_\phi=
\begin{bmatrix}
\cos\phi & -\sin\phi\\
\sin\phi & \cos\phi
\end{bmatrix}.
\label{eq:simulation-rotation}
\end{equation}
Here, $p^\star=[15~10]^\top$ m, $\phi=25^\circ$, $a_1=6$, $a_2=4$, $\ell_1=35$ m, and $\ell_2=18$ m. The parameters $\ell_1$ and $\ell_2$ set the spatial scales along the two principal directions of the field. The field has a unique stationary point and global minimizer $p^\star$ and satisfies Assumption~\ref{ass:main}. In particular, the bounds used in the simulation are
$G=\sqrt{(a_1/\ell_1)^2+(a_2/\ell_2)^2}$,
$L=\max\{a_1/\ell_1^2,a_2/\ell_2^2\}$, and
$M_3=\frac{4}{3\sqrt{3}}\max\{a_1/\ell_1^3,a_2/\ell_2^3\}$.

The parameters are selected as $\rho=3$ m, $\lambda=10^{-5}$, $\sigma=5\times10^{-4}$, $\varepsilon=0.06$, $\eta=0.3$, $\delta=0.05$, $\bar\alpha=240^\circ$, and $m=25$. Independent zero-mean Gaussian noise with standard deviation $\sigma$ is added to each scalar measurement. The five initial states $(x,y,\theta)$ are $(-65,25,15^\circ)$, $(-35,-45,-40^\circ)$, $(10,70,100^\circ)$, $(75,55,170^\circ)$, and $(80,-30,-110^\circ)$. For these parameters, the offline partial-scan condition \eqref{eq:partial-resolution-condition} is satisfied.

The unicycle dynamics in \eqref{eq:model} are simulated directly. During each scan, $v=0$ and the robot rotates through the prescribed relative bearings while the offset sensor collects measurements. After every measurement, the second-order estimate, the confidence set, and the quantities $\gamma^-_{k,n}$ and $\gamma^+_{k,n}$ are updated, and the scan decision \eqref{eq:scan-stopping-rule} is checked. If motion is selected, the robot first rotates toward $d_{k,\tau_k}$ and then translates with $\omega=0$ over the distance $\ell_k$ in \eqref{eq:move-length}. We use $\omega_{\rm scan}=0.8$ rad/s, $\omega_{\rm turn}=1.2$ rad/s, and $v_{\rm move}=4$ m/s.

\begin{figure}
    \centering
    \includegraphics[width=0.75\linewidth]{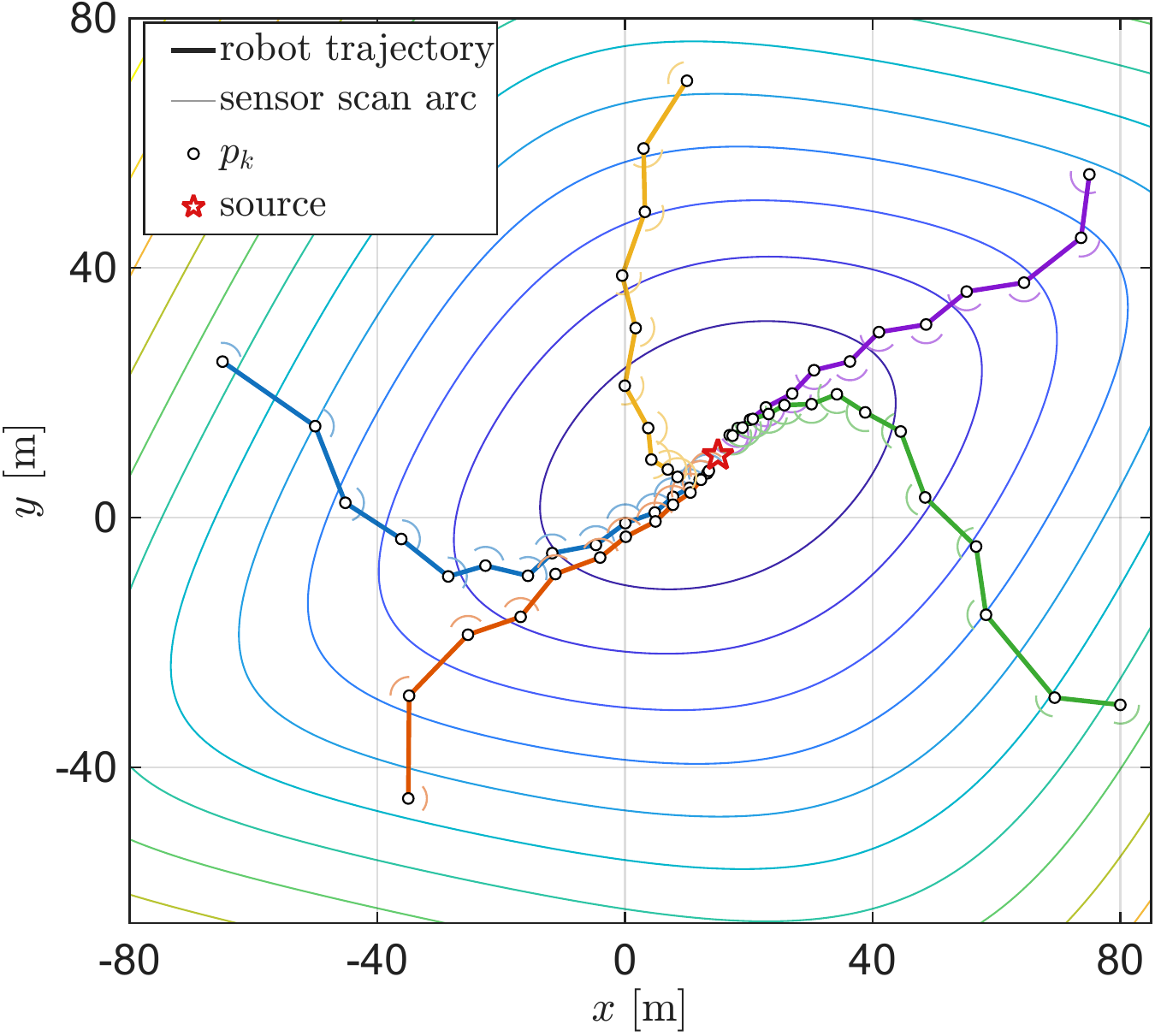}
    \caption{Partial-scan-and-move source seeking from different initial conditions. }
    \label{fig:partial-scan-convergence}
\end{figure}

Figure~\ref{fig:partial-scan-convergence} shows that the trajectories from different initial positions and headings approach the same source. Each open circle is an episode center $p_k$, and each light arc is the actual path of the offset sensor while the robot center is held fixed during that scan. The arc lengths vary because the scan is stopped as soon as one of the two conditions in \eqref{eq:scan-stopping-rule} is met; hence a full rotation is not required. After a nonterminal scan, the robot moves along the direction obtained from the current gradient confidence set. As the source is approached, the move lengths decrease, and each run terminates when the confidence set establishes $\gamma^+_{k,\tau_k}\leq\varepsilon$. The results illustrate that the proposed strategy uses only the partial scan needed to make the current decision while driving the robot toward an $\varepsilon$-stationary neighborhood of the source.

\section{Conclusion}\label{sec:conclusion}

This paper studied mobile robot source seeking with an offset scalar sensor using partial circular scans. We showed that fixed-radius sensing limits gradient estimation and that first-harmonic regression can suffer additional curvature-induced error under partial scans. A second-order harmonic model was therefore used to construct an anytime-valid gradient confidence set, which determines when to stop scanning and whether to declare approximate stationarity or select a descent direction. The resulting strategy guarantees finite-move convergence to the prescribed gradient tolerance with high probability, while avoiding a full rotation under suitable scan conditions.

\bibliographystyle{IEEEtran}
\bibliography{mybibfile}

\begin{thebibliography}{10}
\providecommand{\url}[1]{#1}
\csname url@samestyle\endcsname
\providecommand{\newblock}{\relax}
\providecommand{\bibinfo}[2]{#2}
\providecommand{\BIBentrySTDinterwordspacing}{\spaceskip=0pt\relax}
\providecommand{\BIBentryALTinterwordstretchfactor}{4}
\providecommand{\BIBentryALTinterwordspacing}{\spaceskip=\fontdimen2\font plus
\BIBentryALTinterwordstretchfactor\fontdimen3\font minus \fontdimen4\font\relax}
\providecommand{\BIBforeignlanguage}[2]{{%
\expandafter\ifx\csname l@#1\endcsname\relax
\typeout{** WARNING: IEEEtran.bst: No hyphenation pattern has been}%
\typeout{** loaded for the language `#1'. Using the pattern for}%
\typeout{** the default language instead.}%
\else
\language=\csname l@#1\endcsname
\fi
#2}}
\providecommand{\BIBdecl}{\relax}
\BIBdecl

\bibitem{zhang2007source}
C.~Zhang, D.~Arnold, N.~Ghods, A.~Siranosian, and M.~Krstic, ``Source seeking with non-holonomic unicycle without position measurement and with tuning of forward velocity,'' \emph{Systems \& control letters}, vol.~56, no.~3, pp. 245--252, 2007.

\bibitem{cochran2009nonholonomic}
J.~Cochran and M.~Krstic, ``Nonholonomic source seeking with tuning of angular velocity,'' \emph{IEEE Transactions on Automatic Control}, vol.~54, no.~4, pp. 717--731, 2009.

\bibitem{krstic2000stability}
M.~Krsti{\'c} and H.-H. Wang, ``Stability of extremum seeking feedback for general nonlinear dynamic systems,'' \emph{Automatica}, vol.~36, no.~4, pp. 595--601, 2000.

\bibitem{DurrStankovicEbenbauerJohansson2013}
H.-B. D{\"u}rr, M.~S. Stankovi{\'c}, C.~Ebenbauer, and K.~H. Johansson, ``{L}ie bracket approximation of extremum seeking systems,'' \emph{Automatica}, vol.~49, no.~6, pp. 1538--1552, 2013.

\bibitem{Suttner2023TAC}
R.~Suttner, ``Extremum-seeking control for a class of mechanical systems,'' \emph{IEEE Transactions on Automatic Control}, vol.~68, no.~2, pp. 1200--1207, 2023.

\bibitem{wang2023underactuated}
B.~Wang, S.~G. Nersesov, H.~Ashrafiuon, P.~Naseradinmousavi, and M.~Krsti{\'c}, ``Underactuated source seeking by surge force tuning: Theory and boat experiments,'' \emph{IEEE Transactions on Control Systems Technology}, vol.~31, no.~4, pp. 1649--1662, Jul. 2023.

\bibitem{wang2026nonholonomicsourceseekingtorque}
\BIBentryALTinterwordspacing
B.~Wang, ``Nonholonomic source seeking by torque tuning: Local and semi-global feedbacks,'' 2026. [Online]. Available: \url{https://arxiv.org/abs/2607.02458}
\BIBentrySTDinterwordspacing

\bibitem{wang2025extremum}
B.~Wang, H.~Ashrafiuon, and S.~G. Nersesov, ``Extremum seeking control for antenna pointing via symmetric product approximation,'' \emph{IFAC-PapersOnLine}, vol.~59, no.~30, pp. 869--874, 2025.

\bibitem{liu2010stochastic}
S.-J. Liu and M.~Krstic, ``Stochastic source seeking for nonholonomic unicycle,'' \emph{Automatica}, vol.~46, no.~9, pp. 1443--1453, 2010.

\bibitem{lin2017stochastic}
J.~Lin, S.~Song, K.~You, and M.~Krstic, ``Stochastic source seeking with forward and angular velocity regulation,'' \emph{Automatica}, vol.~83, pp. 378--386, 2017.

\bibitem{Burian1996}
E.~Burian, D.~R. Yoerger, A.~Bradley, and H.~Singh, ``Gradient search with autonomous underwater vehicles using scalar measurements,'' in \emph{Proc. IEEE Symposium on Autonomous Underwater Vehicle Technology}, 1996, pp. 86--98.

\bibitem{Ogren2004}
P.~{\"O}gren, E.~Fiorelli, and N.~E. Leonard, ``Cooperative control of mobile sensor networks: Adaptive gradient climbing in a distributed environment,'' \emph{IEEE Transactions on Automatic Control}, vol.~49, no.~8, pp. 1292--1302, 2004.

\bibitem{matveev2011navigation}
A.~S. Matveev, H.~Teimoori, and A.~V. Savkin, ``Navigation of a unicycle-like mobile robot for environmental extremum seeking,'' \emph{Automatica}, vol.~47, no.~1, pp. 85--91, 2011.

\bibitem{LiGuo2012}
S.~Li and Y.~Guo, ``Distributed source seeking by cooperative robots: All-to-all and limited communications,'' in \emph{Proceedings of the IEEE International Conference on Robotics and Automation}, 2012, pp. 1107--1112.

\bibitem{Hunnekens2014}
B.~G.~B. Hunnekens, M.~A.~M. Haring, N.~van~de Wouw, and H.~Nijmeijer, ``A dither-free extremum-seeking control approach using 1st-order least-squares fits for gradient estimation,'' in \emph{Proceedings of the 53rd IEEE Conference on Decision and Control}, 2014, pp. 2679--2684.

\bibitem{Zengin2020}
N.~Zengin and B.~Fidan, ``Adaptive extremum seeking using recursive least squares,'' \emph{arXiv preprint arXiv:2003.03891}, 2020.

\bibitem{JamesKavanaugh2026}
D.~James-Kavanaugh, P.~McNamee, Q.~Wang, and Z.~N. Ahmadabadi, ``Servos for local map exploration onboard nonholonomic vehicles for extremum seeking,'' \emph{IEEE Transactions on Control Systems Technology}, vol.~34, no.~5, pp. 2733--2748, 2026.

\bibitem{NIPS2011_e1d5be1c}
Y.~Abbasi-yadkori, D.~P\'{a}l, and C.~Szepesv\'{a}ri, ``Improved algorithms for linear stochastic bandits,'' in \emph{Advances in Neural Information Processing Systems}, J.~Shawe-Taylor, R.~Zemel, P.~Bartlett, F.~Pereira, and K.~Weinberger, Eds., vol.~24.\hskip 1em plus 0.5em minus 0.4em\relax Curran Associates, Inc., 2011.

\end{thebibliography}

\end{document}